\DeclarePairedDelimiter\ceil{\lceil}{\rceil}
\DeclarePairedDelimiter\floor{\lfloor}{\rfloor}
\algrenewcommand\textproc{}
\begin{document}

\newtheorem{thm}{Theorem}
\newtheorem{cor}{Corollary}
\newtheorem{cl}{Claim}
\newtheorem{lma}{Lemma}
\newtheorem{defi}{Definition}
\newtheorem{proper}{Property}
\newcommand{\Mod}[1]{\ (\mathrm{mod}\ #1)}
\newcommand{\etal}{\textit{et al.}}

\newcommand{\myfont}{\fontsize{23pt}{20pt}\selectfont}

\title{\myfont Towards An Optimal Solution to Place Bistatic Radars\\ \vspace{5.7pt} for Belt Barrier Coverage with Minimum Cost\vspace{-0.3cm}}

\author{\IEEEauthorblockN{Tu N. Nguyen\IEEEauthorrefmark{1}, Bing-Hong Liu\IEEEauthorrefmark{2}, My T. Thai\IEEEauthorrefmark{3}, and Ivan Djordjevic\IEEEauthorrefmark{4}\\}

\IEEEauthorblockA{\IEEEauthorrefmark{1}Department of Computer Science, Kennesaw State University, Marietta, GA 30060, USA\\}
\IEEEauthorblockA{\IEEEauthorrefmark{2}Department of Electronic Engineering, National Kaohsiung University of Science and Technology, Kaohsiung, Taiwan\\}
\IEEEauthorblockA{\IEEEauthorrefmark{3}
Department of Computer and Information Science and Engineering, University of Florida, Gainesville, FL 32612, USA\\}
\IEEEauthorblockA{\IEEEauthorrefmark{4}
Department of Electrical and Computer Engineering, University of Arizona, Tucson, AZ 85721, USA\\}

\it{Email: tu.nguyen@kennesaw.edu, bhliu@nkust.edu.tw, mythai@cise.ufl.edu, ivan@arizona.edu}\vspace{-0.5cm}}

\IEEEcompsoctitleabstractindextext{

\begin{abstract}
With the rapid growth of threats, sophistication and diversity in the manner of intrusion, traditional belt barrier systems are now faced with a major challenge of providing high and concrete coverage quality to expand the guarding service market. Recent efforts aim at constructing a belt barrier by deploying bistatic radar(s) on a specific line regardless of the limitation on deployment locations, to keep the width of the barrier from going below a specific threshold and the total bistatic radar placement cost is minimized, referred to as the Minimum Cost Linear Placement (MCLP) problem.
The existing solutions are \textit{heuristic} and their validity is tightly bound by the \textit{barrier width} parameter that these solutions only work for a fixed barrier width value. In this work, we propose an optimal solution, referred to as the Opt$\_$MCLP, for the ``open MCLP problem" that works for full range of the barrier width. Through rigorous theoretical analysis and experimentation, we demonstrate that the proposed algorithms perform well in terms of placement cost reduction and barrier coverage guarantee.
\end{abstract}
\begin{IEEEkeywords}
Bistatic radar deployment, barrier coverage, network optimization, sensor networks.
\end{IEEEkeywords}}

\noindent \maketitle

\IEEEdisplaynotcompsoctitleabstractindextext

\IEEEpeerreviewmaketitle

\section{Introduction}
Images of moats 
remind us of the system of guarding ancient castles and the idea is still used today, yet at a higher level. With the development of sensing and communication technology, instead of using moat, barriers of sensors and radars, referred to as the bistatic radar systems, have been building to guard not only critical places but also spaces and national borders. Many applications (e.g., typically for military to detect targets and intruders) use a bistatic radar system that comprises at least a radar signal transmitters and at a radar signal least receiver\footnote{For ease of exposition, hereafter referred to as transmitter(s) and the receiver(s), respectively.}, wherein the transmitter(s) and the receiver(s) are located in a different location, to form the \textit{barrier coverage} \cite{7249400,6416889,8031350,KARATAS2018129,8247236,CHEN2015129}.

In recent years, there are efforts in the existing literature to design barrier coverage using
radar \cite{7087387,7087389,7944669,TIAN2017222,9121308,9404350}, wherein a radar uses radio waves to detect an object by producing the radio waves and collecting the echo signal reflected off from the target,
giving information about the object's location and speed.
Emerging research problems in recent years are how to enhance the quality of barrier coverage and how to efficiently deploy bistatic radar systems while meeting
quality requirements.
In terms of barrier coverage quality, one of the important aspects that reflects the quality of bistatic radar systems is the width of the barrier coverage area.
Recent efforts aim at constructing a linear belt barrier with pre-defined width by deploying bistatic radar(s) on a specific line
such that the total bistatic radar placement cost is minimized, referred to as the Minimum Cost Linear Placement (MCLP) problem \cite{7087387}.
The sensing model of bistatic radars (Cassini oval sensing model \cite{7087387}) is in fact very complex and
the shape of sensing areas are, therefore, varied according to the variation of distance between the transmitter(s) and the receiver(s) in the bistatic radar system.
The validity of the most recent solutions \cite{7087387,7087389} is unfortunately tightly bound by the ``barrier width" parameter
that these solutions only work for a pre-fixed barrier width value. Thus, they fail to solve the MCLP problem with flexible width ranges of barrier.

Specifically, the most recent solutions proposed for the MCLP problem \cite{7087387,7087389} can only work for covering a belt-shaped area
having width less than or equal to $\frac{2\zeta_{max}}{\sqrt{3}}$, namely $0 \leq 2\omega \leq \frac{2\zeta_{max}}{\sqrt{3}}$, where $2\omega$ denotes the width of the belt-shaped area and $\zeta_{max}$ is the radius of the coverage circle centered at the location of transmitter and receiver
when the transmitter and the receiver are located at the same location (we will present how to obtain $\zeta_{max}$ in detail in $\S$\ref{subsection:system_model_and_the_problem}).
In other words, the barrier built by the bistatic radar system using their solution cannot cover the belt-shaped with width greater than $\frac{2\zeta_{max}}{\sqrt{3}}$, namely $\frac{2\zeta_{max}}{\sqrt{3}} < 2\omega < 2\zeta_{max}$\footnote{$\zeta_{max}$ is the radius of the coverage circle centered at the location of transmitter and receiver
when the transmitter and the receiver are located at the same location, therefore, the maximum width of the belt-shaped area in the MCLP problem has to be less than $2\zeta_{max}$}.
In this work, we seek to design an optimal algorithm for the ``open MCLP problem" that works for full range of the barrier width,
to achieve maximum coverage for the barrier.
In addition, the existing solutions \cite{7087387} proposed for the MCLP problem are still ``heuristic" (not an optimal solution)
because boundary conditions are not considered.
The main contributions of this paper are summarized as follows.

\begin{itemize}

  \item We investigate the problem in both cases when $0 \leq 2\omega \leq \frac{2\zeta_{max}}{\sqrt{3}}$
  and $\frac{2\zeta_{max}}{\sqrt{3}} < 2\omega <  2\zeta_{max}$.
  We propose an optimal algorithm -- dubbed the Opt$\_$MCLP -- to find the optimal solution for the ``open MCLP problem" that works for full range of the barrier width ($0 \leq 2\omega <  2\zeta_{max}$).

  \item We provide rigorous theoretical analyses to demonstrate the correctness of the proposed optimal solution.

  \item Extensive simulations are conducted to demonstrate the performance of the Opt$\_$MCLP for the MCLP problem.
  The obtained results show that the Opt$\_$MCLP provides a significantly higher performance than the existing method.
\end{itemize}

\textbf{Organization:}
The remaining sections of this paper are organized as follows.
The remaining sections of this paper are
organized as follows. The basic mathematical notations and
system model are initially introduced, and simple examples
are used to present the key ideas behind the proposed work in $\S$\ref{section:preliminaries}.
An optimal solution, termed the Opt$\_$MCLP is proposed for the full range of the MCLP problem in $\S$\ref{section:opti_linear_second_range}.
Simulations are evaluated in Section \ref{section:Simulation}, and
the paper is concluded in Section \ref{section:Conclusion}.

\vspace{-15pt}
\section{Preliminaries}\label{section:preliminaries}

\subsection{System Model and Problem Definition}\label{subsection:system_model_and_the_problem}

A bistatic radar is composed of at least a transmitter and a receiver that are separated and often located at different positions.
In a bistatic radar, the transmitter $t$ is responsible for
producing and propagating the radio waves and the receiver $r$ can detect an object (target) $x$
using the echo signal reflected off from $x$ if the received signal-to-noise ratio (SNR) is not less than a threshold $\gamma$.
For any target $x$ and a bistatic radar paired by transmitter $t$ and receiver $r$,
the SNR of the radar signal that is sent from $t$, reflected by $x$, and received by $r$ can be obtained \cite{willis2005bistatic} as follows:
\begin{equation}\label{eq:SNR}
SNR(t,x,r) = \frac{K}{d^2(t,x) \cdot d^2(x,r)},
\end{equation}
where $d(t,x)$ (or $d(x,r)$) denotes the Euclidean distance between $t$ (or $r$) and $x$; and
$K$ represents a constant that is determined by a bistatic radar's physical characteristics, such as
the antenna's power gain and the transmission power.
When the minimum threshold $\gamma$ is given, for any pair of transmitter $t$ and receiver $r$,
the possible locations of targets $x$ with $SNR(t,x,r)=\gamma$ can be characterized
by the locus of points such that the product of the distances to $t$ and $r$, namely $d(x,t) \cdot d(x,r)$, is the constant equal to
$\zeta^2_{max}$, where $\zeta_{max}$ is a constant and denotes $\sqrt[4]{\frac{K}{\gamma}}$.
The locus of points $x$, which will be a \textit{closed curve} or \textit{a pair of closed curves},
is known as a Cassini oval \cite{Cassinioval} as depicted in Fig. \ref{fig_radar_region_shape}.
For any target $y$ within the Cassini oval, the product of the distances from $y$ to $t$ and $r$
is not greater than $\sqrt{\frac{K}{\gamma}}$, that is, $d(y,t) \cdot d(y,r) \leq \zeta^2_{max}$; and therefore, we have that $SNR(t,y,r) \geq \gamma$ and $y$, therefore, can be detected by $r$.
Hereafter, a point $z$ in the plane is said to be covered by a bistatic radar paired by transmitter $t$ and receiver $r$
if $z$ is within the area surrounded by the Cassini oval with focal points at $t$ and $r$.

When $\zeta_{max}$ is given, the shape of the Cassini oval with focal points at transmitter $t$ and receiver $r$ is determined by the distance between $t$ and $r$, that is, $d(t,r)$. Four shape types of the  Cassini  oval depicted in Fig. \ref{fig_radar_region_shape} are listed as follows with different range of $d(t,r)$ \cite{7087387}:
\begin{itemize}
  \item an ellipse curve (Fig. \ref{fig_shape_01}) if $0 \leq d(t,r) < \sqrt{2}\zeta_{max}$;
  \item a waist curve (Fig. \ref{fig_shape_02}) if $\sqrt{2}\zeta_{max} \leq d(t,r) < 2\zeta_{max}$;
  \item a lemniscate of Bernoulli (Fig. \ref{fig_shape_03}) if $d(t,r) = 2\zeta_{max}$;
  \item a pair of closed curves (Fig. \ref{fig_shape_04}) if $d(t,r) > 2\zeta_{max}$.
\end{itemize}

\begin{figure}
\center
\subfigure[]{\includegraphics[width=3cm]{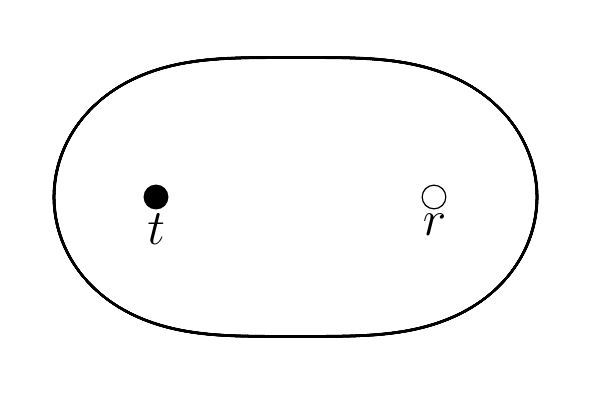}\label{fig_shape_01}}
\subfigure[]{\includegraphics[width=3cm]{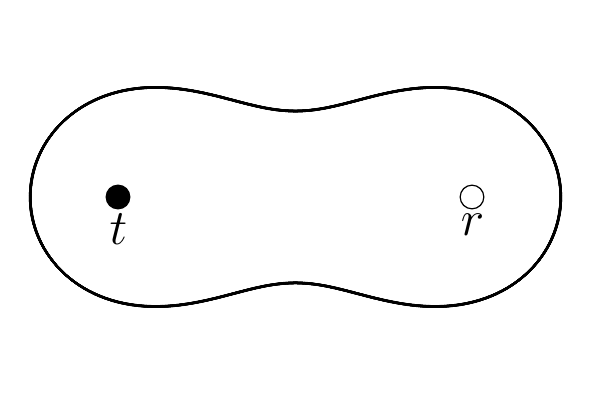}\label{fig_shape_02}}
\subfigure[]{\includegraphics[width=3cm]{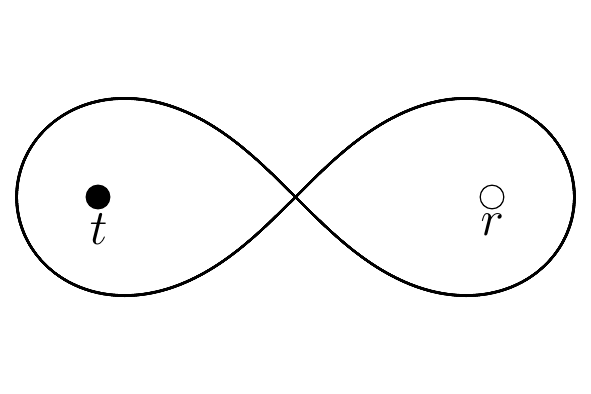}\label{fig_shape_03}}
\subfigure[]{\includegraphics[width=3cm]{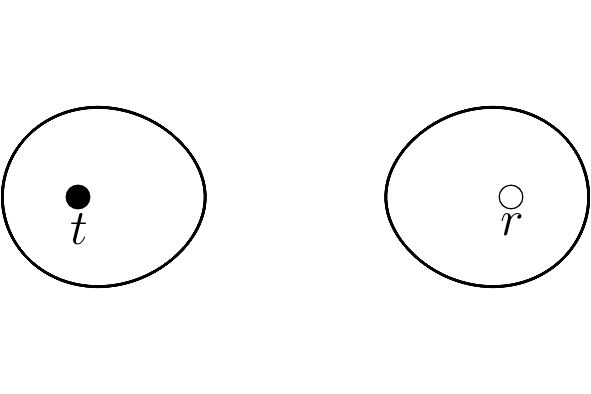}\label{fig_shape_04}}
\caption{Four shape types of the Cassini oval with focal points at transmitter $t$ and receiver $r$, where small solid and hollow circles denote transmitters and receivers, respectively. The shape types with $0 \leq d(t,r) < \sqrt{2}\zeta_{max}$,
$\sqrt{2}\zeta_{max} \leq d(t,r) < 2\zeta_{max}$, $d(t,r) = 2\zeta_{max}$, and $d(t,r) > 2\zeta_{max}$
are shown in (a), (b), (c), and (d), respectively.}
\vspace{-15pt}
\label{fig_radar_region_shape}
\end{figure}

When multiple transmitters and receivers are deployed in an area,
transmitters and receivers can be paired to form multiple bistatic radars.
Here, transmitters are assumed to operate with orthogonal frequency such that
mutual interferences at a receiver can be avoided \cite{5703085}.
Therefore, each receiver can be paired with different transmitters to form bistatic radars by switching the frequency.
In addition, because receivers can receive the radar signal sent from a transmitter,
multiple receivers can also be paired with the same transmitter.
Because multiple bistatic radars are often used to detect targets,
an area is said to be covered by the set of transmitters $T$ and the set of receivers $R$
hereafter, if for any point $z$ in the area, $z$ is covered by at least one bistatic radar paired by a transmitter $t \in T$ and a receiver $r \in R$.

Let $c_t$ and $c_r$ be the placement/deployment costs of a transmitter and a receiver, respectively.
Also let $Area$ be a belt-shaped (rectangle) area with length $L$ and width $W$, where
$L \geq W > 0$.
In addition, when a transmitter and a receiver are located at the same location,
the shape of the generated Cassini oval will be a circle centered at the transmitter/receiver with radius $\zeta_{max}$ \cite{Cassinioval}.
While we are given $\zeta_{max}$, $c_t$, $c_r$, and $Area$ with width less than $2\zeta_{max}$,
the Minimum Cost Linear Placement (MCLP) problem
is to deploy a set of transmitters $T$ and a set of receivers $R$ on the \textit{line} that goes through
the middle points of the shorter sides of the $Area$, such that the $Area$ is fully covered and the total placement cost of bistatic radar(s), namely $c_t \times |T| + c_r \times |R|$,
is minimized,
where $|T|$ and $|R|$ denote the cardinalities of $T$ and $R$, respectively.

\begin{figure}[t]
\center
\subfigure{\includegraphics[width=8cm]{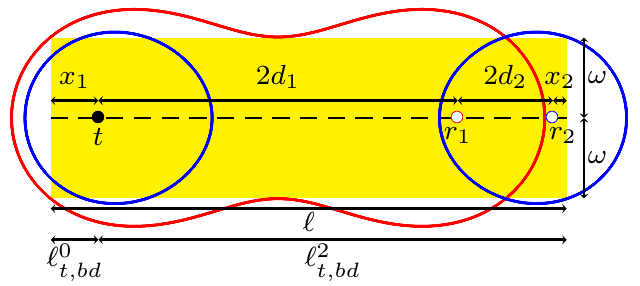}\label{fig_example_maxlength_01}}
\caption{Examples of covering a rectangle area by using one transmitter and two receivers (best viewed in color)}.
\vspace{-15pt}
\label{fig_maxlength}
\end{figure}

For the MCLP problem, let $2\omega$ be the width of the $Area$, and $\beta$ denote $\frac{\zeta_{max}}{\omega}$.
Although solutions for the MCLP problem are proposed in \cite{7087387},
the solutions only work for the case of $\beta \geq \sqrt{3}$, that is,
$\frac{\zeta_{max}}{\omega} \geq \sqrt{3}$ ($2\omega \leq \frac{2\zeta_{max}}{\sqrt{3}}$).
In addition, the width of the $Area$ in the MCLP problem is always less than $2\zeta_{max}$, which implies
that these solutions are not valid for the case of $\frac{2\zeta_{max}}{\sqrt{3}} < 2\omega <  2\zeta_{max}$ ($1 < \beta < \sqrt{3}$).
This motivates us to explore an optimal solution for the MCLP problem that especially works for \textit{full
range of the coverage $Area$ width} ($0 \leq 2\omega <  2\zeta_{max}$).

\section{The Optimal Solution for the Minimum Cost Linear Placement (MCLP) Problem}\label{section:opti_linear_second_range}

By Fig. \ref{fig_radar_region_shape},
we have that when transmitter $t$ and receiver $r$ are close enough, the area covered by $t$ and $r$ will be an ellipse or waist shape.
In order to cover a rectangle area with width $2\omega$,
the distance between the upper and the lower parts of the ellipse (or the waist) curve has to be not less than $2\omega$.
Take Fig. \ref{fig:example_line_deployment_with_larger_width_with_type_shapes}, for example, where $\omega = \frac{\zeta_{max}}{\sqrt{3}}$.
When the covered area is a waist shape as shown in Fig. \ref{fig_secondrange_01},
because the upper and the lower parts of the curve generated by transmitter $t$ and receiver $r$ are symmetric,
and $d(E,F) = d(t,D) = d(r,C) = \frac{\zeta_{max}}{\sqrt{3}}$, $t$ and $r$ can cover a rectangle area
with width $2\omega$. Similar example with an ellipse shape is shown in Fig. \ref{fig_secondrange_02}.
This motivates us to find an optimal deployment of a transmitter and a receiver such that the width of the covered rectangle can be satisfied
and its length can be maximized.
As the examples in Fig. \ref{fig:example_line_deployment_with_larger_width_with_type_shapes}, when a rectangle is fully covered by transmitter $t$ and receiver $r$,
$t$ (or $r$) may be at a distance of $\theta$ from the closest vertical boundary of the covered rectangle.
By the observation,
we show that transmitter $t$ and receiver $r$ with $d(t,r) = \sqrt{\beta^2 \zeta^2_{max} - \omega^2}$
can cover a rectangle with width $2\omega$ and length $\sqrt{\beta^2 \zeta^2_{max} - \omega^2}$;
and that transmitter $t$ and receiver $r$ with $d(t,r) = \omega \sqrt{\frac{\beta^4}{\frac{\theta'^2}{\omega^2} + 1} - 1} - \theta'$
can cover a rectangle with width $2\omega$ and maximum length $\omega \sqrt{\frac{\beta^4}{\frac{\theta'^2}{\omega^2} + 1} - 1} + \theta'$
by Lemmas \ref{lma:second_range:max_dtr_with_thetas}-\ref{lma:second_range:rectangle_is_covered},
where $\theta'$ denotes the $\theta$ in $[0, \sqrt{\beta^2 \zeta^2_{max} - \omega^2}]$
having maximum value of $\omega \sqrt{\frac{\beta^4}{\frac{\theta^2}{\omega^2} + 1} - 1} + \theta$
and can be obtained by Newton's method \cite{atkinson2008introduction}.
In addition, Lemma \ref{lma:second_range:max_length_foroneTx} shows that the length of the rectangle with width $2\omega$
covered by one transmitter and any number of receivers is at most $2\sqrt{\beta^2 \zeta^2_{max} - \omega^2}$.
Note that we can use one transmitter $t$ and two receivers, $r_1$ and $r_2$, by the sequence $(r_1, t, r_2)$,
to cover a rectangle with width $2\omega$ and length $2\sqrt{\beta^2 \zeta^2_{max} - \omega^2}$ by Lemma \ref{lma:second_range:rectangle_is_covered},
as the example in Fig. \ref{fig:example_line_deployment_with_larger_width}.

\begin{figure}[t]
\center
\subfigure[]{\includegraphics[width=4cm]{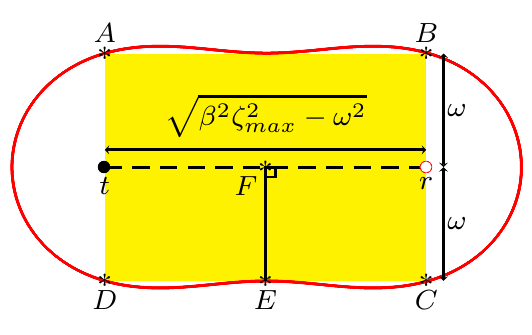}\label{fig_secondrange_01}}
\subfigure[]{\includegraphics[width=3.7cm]{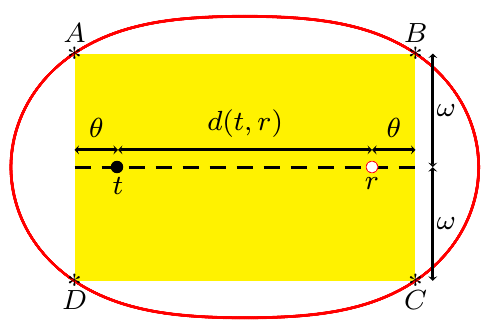}\label{fig_secondrange_02}}
\caption{Examples of deploying transmitter $t$ and receiver $r$, where $\omega = \frac{\zeta_{max}}{\sqrt{3}}$. The distances between $r$ and the boundary in (a) and (b) are 0 and $\theta$, respectively.}
\vspace{-15pt}
\label{fig:example_line_deployment_with_larger_width_with_type_shapes}
\end{figure}

\begin{lma}\label{lma:second_range:max_dtr_with_thetas}
When $1 < \beta < \sqrt{3}$ and a rectangle with width $2\omega$ is covered by transmitter $t$ and receiver $r$,
$d(t,r) + \theta$ is maximized if and only if $d(t,r) = \sqrt{\beta^2 \zeta^2_{max} - \omega^2}$ and $\theta = 0$, where
$\theta$ denotes the distance between $t$ (or $r$) and the closest vertical boundary of the rectangle.
In addition, $d(t,r) + 2 \theta$ is maximized if and only if
$d(t,r) = \omega \sqrt{\frac{\beta^4}{\frac{\theta'^2}{\omega^2} + 1} - 1} - \theta'$,
where $\theta'$ denotes the $\theta$ in $[0, \sqrt{\beta^2 \zeta^2_{max} - \omega^2}]$
having maximum value of $\omega \sqrt{\frac{\beta^4}{\frac{\theta^2}{\omega^2} + 1} - 1} + \theta$.
\end{lma}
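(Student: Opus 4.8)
\medskip
\noindent\textit{Proof plan.}\quad
The plan is to place the Cassini oval generated by $t$ and $r$ in coordinates with foci at $(\pm d/2,0)$, where $d:=d(t,r)$; since the rectangle $\mathcal R$ to be covered has its long axis on the line through $t$ and $r$, and the oval is symmetric under $(x,y)\mapsto(-x,y)$ and $(x,y)\mapsto(x,-y)$, the extremal rectangles may be taken to be $\mathcal R=[-\ell/2,\ell/2]\times[-\omega,\omega]$ with $\theta=\ell/2-d/2\ge 0$. Everything then rests on one inequality: writing $h(\theta):=\omega\sqrt{\frac{\beta^4}{\theta^2/\omega^2+1}-1}$, which is strictly decreasing on its natural domain $[0,\sqrt{\beta^2\zeta^2_{max}-\omega^2}]$, I claim $d(t,r)+\theta\le h(\theta)$ for every covering configuration, with equality precisely when $\mathcal R$ is symmetric about the oval centre and its far corner lies on the oval.

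First I would prove this inequality. A corner of $\mathcal R$ on the side far from $t$ has distance at least $\sqrt{(d+\theta)^2+\omega^2}$ to $t$ and at least $\sqrt{\theta^2+\omega^2}$ to $r$, because $t$ and $r$ are each at distance at least $\theta$ from the nearest short side; since that corner lies in the oval, the Cassini condition yields $\sqrt{(d+\theta)^2+\omega^2}\,\sqrt{\theta^2+\omega^2}\le\zeta^2_{max}=\beta^2\omega^2$, which rearranges to $d(t,r)+\theta\le h(\theta)$ (the equality case also forces the opposite corner onto the oval, hence symmetry). For the converse direction I would show the extremal values are actually attained: solving the Cassini equation at ordinate $y$ for $x^2$ shows the oval meets that horizontal line in $[-X_d(y),X_d(y)]$ with $X_d(y)=\sqrt{(d^2/4-y^2)+\sqrt{\zeta^4_{max}-d^2y^2}}$ provided $d\le 2\sqrt{\zeta^2_{max}-\omega^2}$; since $X_d(\cdot)$ decreases in $|y|$, containment $\mathcal R\subseteq$ oval reduces to checking the corners, and a short computation shows the configurations used below satisfy $d\le 2\sqrt{\zeta^2_{max}-\omega^2}$ exactly because $(\beta^2-1)(\beta^2-3)\le 0$, i.e. $1<\beta<\sqrt3$.

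Granting the inequality, the first assertion is immediate: $d(t,r)+\theta\le h(\theta)\le h(0)=\omega\sqrt{\beta^4-1}=\sqrt{\beta^2\zeta^2_{max}-\omega^2}$ because $h$ strictly decreases, and equality forces $\theta=0$ together with the far corner on the oval, hence $d(t,r)=h(0)=\sqrt{\beta^2\zeta^2_{max}-\omega^2}$; the realizability step shows this configuration does cover a width-$2\omega$ rectangle of length $d(t,r)$. For the second assertion the same inequality gives $d(t,r)+2\theta\le h(\theta)+\theta$ for every feasible $\theta$, so $d(t,r)+2\theta\le\max_{\theta\in[0,\sqrt{\beta^2\zeta^2_{max}-\omega^2}]}(h(\theta)+\theta)=h(\theta')+\theta'$ by the definition of $\theta'$; equality forces $\theta=\theta'$ and the far corner on the oval, whence $d(t,r)=h(\theta')-\theta'=\omega\sqrt{\frac{\beta^4}{\theta'^2/\omega^2+1}-1}-\theta'$, and realizability shows this (admissible, as noted below) configuration attains length $h(\theta')+\theta'$.

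I expect the realizability step to be the main obstacle: it requires understanding the shape of the Cassini oval in the non-convex ``waist'' regime $\sqrt2\,\zeta_{max}\le d<2\zeta_{max}$ and ruling out that an interior edge point of $\mathcal R$, rather than a corner, is the first to leave the oval — and the hypothesis $1<\beta<\sqrt3$ is precisely what makes this succeed. A secondary point is characterizing $\theta'$: one checks that $h$ is an involution ($h\circ h=\mathrm{id}$), so $h(\theta)+\theta$ is invariant under $\theta\mapsto h(\theta)$ and its maximizer on the domain is in general not unique; among the (at most two) maximizers one keeps the one with $d(t,r)=h(\theta')-\theta'\ge 0$, equivalently $\theta'\le\sqrt{\zeta^2_{max}-\omega^2}$, and identifies it as the root of $1+h'(\theta)=0$, which is what is then located numerically by Newton's method.
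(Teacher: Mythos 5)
Your proposal is correct and follows essentially the same route as the paper: the heart of both arguments is the corner-coverage (Cassini) condition $\sqrt{(d(t,r)+\theta)^2+\omega^2}\,\sqrt{\theta^2+\omega^2}\le\zeta^2_{max}$, which gives $d(t,r)+\theta\le\omega\sqrt{\tfrac{\beta^4}{\theta^2/\omega^2+1}-1}$ and $\theta\in[0,\sqrt{\beta^2\zeta^2_{max}-\omega^2}]$, after which both parts follow by maximizing over $\theta$. The only differences are that you fold the attainability/realizability step into this lemma (the paper defers it to Lemma~\ref{lma:second_range:rectangle_is_covered}, where it checks corner and midpoint coverage rather than your monotone cross-section argument) and you add the legitimate observation, which the paper glosses over, that $h(\theta)+\theta$ can have two maximizers related by the involution $h$, of which only the one with $h(\theta')\ge\theta'$ yields a nonnegative $d(t,r)$.
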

\begin{proof}
Without loss of generality, let $\Box ABCD$ be the rectangle covered by $t$ and $r$,
as the rectangle in Fig. \ref{fig_secondrange_01} or Fig. \ref{fig_secondrange_02}.
Because the upper and the lower parts (or, the left and the right parts) of the curve generated by
$t$ and $r$ are symmetric, the proof suffices
to show that the cases hold when the point $B$ is covered.
When the point $B$ is exactly covered by $t$ and $r$,
we have that $d(t,B) \cdot d(r,B) = \zeta^2_{max}$, which implies that
$\sqrt{ (d(t,r) + \theta)^2 + \omega^2 } \cdot \sqrt{\theta^2 + \omega^2} = \zeta^2_{max}$.
We thus have that $d(t,r) = \omega \sqrt{\frac{\beta^4}{\frac{\theta^2}{\omega^2} + 1} - 1} - \theta$.
Therefore, $d(t,r) + \theta = \omega \sqrt{\frac{\beta^4}{\frac{\theta^2}{\omega^2} + 1} - 1}$.
Because $\omega$ and $\beta$ are constants, $d(t,r) + \theta$ is maximized if and only if $\theta = 0$.
When $\theta = 0$, $d(t,r) + \theta = d(t,r) = \omega \sqrt{\beta^4 - 1} = \sqrt{\beta^2 \zeta^2_{max} - \omega^2}$.
In addition,
$d(t,r) + 2\theta = \omega \sqrt{\frac{\beta^4}{\frac{\theta^2}{\omega^2} + 1} - 1} + \theta$.
It is clear that $d(t,r) + 2\theta$ is maximized if and only if $\theta \geq 0$.
Assume that $\theta = \sqrt{\beta^2 \zeta^2_{max} - \omega^2} + \epsilon$, where $\epsilon > 0$.
Because $d(t,r) \geq 0$ and $1 < \beta < \sqrt{3}$, we have that $d(t,B) \cdot d(r,B) = \sqrt{ (d(t,r) + \theta)^2 + \omega^2 } \cdot \sqrt{\theta^2 + \omega^2}$
$\geq (\sqrt{ \theta^2 + \omega^2 })^2$
$= (\sqrt{\beta^2 \zeta^2_{max} - \omega^2} + \epsilon)^2 + \omega^2$
$= \beta^2 \zeta^2_{max} + 2 \epsilon \sqrt{\beta^2 \zeta^2_{max} - \omega^2} + \epsilon^2$
$> \zeta^2_{max}$.
This constitutes a contradiction because the point $B$ has to be covered, that is, $d(t,B) \cdot d(r,B) \leq \zeta^2_{max}$.
Therefore, we have that $\epsilon \leq 0$, that is, $\theta \leq \sqrt{\beta^2 \zeta^2_{max} - \omega^2}$.
Because $\theta \geq 0$, we have that $d(t,r) + 2\theta$ is maximized if and only if $0 \leq \theta \leq \sqrt{\beta^2 \zeta^2_{max} - \omega^2}$.
Let $\theta'$ be the $\theta$ in $[0, \sqrt{\beta^2 \zeta^2_{max} - \omega^2}]$
having maximum value of $\omega \sqrt{\frac{\beta^4}{\frac{\theta^2}{\omega^2} + 1} - 1} + \theta$.
We have that $d(t,r) + 2\theta$ is maximized
if and only if $d(t,r) + 2\theta' = \omega \sqrt{\frac{\beta^4}{\frac{\theta'^2}{\omega^2} + 1} - 1} + \theta'$,
that is, $d(t,r) = \omega \sqrt{\frac{\beta^4}{\frac{\theta'^2}{\omega^2} + 1} - 1} - \theta'$,
which completes the proof.
\end{proof}

\begin{lma}\label{lma:second_range:rectangle_is_covered}
When $1 < \beta < \sqrt{3}$,
the transmitter $t$ and the receiver $r$ that are at a distance of $\sqrt{\beta^2 \zeta^2_{max} - \omega^2}$ apart can
cover a rectangle with width $2\omega$ and length equal to $\sqrt{\beta^2 \zeta^2_{max} - \omega^2}$.
In addition, the transmitter $t$ and the receiver $r$
that are at a distance of $d(t,r) = \omega \sqrt{\frac{\beta^4}{\frac{\theta'^2}{\omega^2} + 1} - 1} - \theta'$ apart
can cover a rectangle with width $2\omega$ and length equal to $d(t,r) + 2\theta'$,
where $\theta'$ is defined in Lemma \ref{lma:second_range:max_dtr_with_thetas}.
\end{lma}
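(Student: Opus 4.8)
The plan is to exhibit an explicit symmetric placement of $t$ and $r$ and then verify directly that the product of distances stays at most $\zeta^2_{max}$ throughout the claimed rectangle. Put $t=(-a,0)$ and $r=(a,0)$ on the $x$-axis with $2a=d(t,r)$, and let $\mathcal{R}=[-x_0,x_0]\times[-\omega,\omega]$, where $x_0=a$ for the first placement (so the boundary distance $\theta$ is $0$) and $x_0=a+\theta'$ for the second (so $\theta=\theta'$). In either case the horizontal side of $\mathcal{R}$ has length $2x_0$, which is $\sqrt{\beta^2\zeta^2_{max}-\omega^2}$ in the first case and $d(t,r)+2\theta'$ in the second, as required. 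Since a point $(x,y)$ is covered exactly when
\[
P(x,y):=\big[(x+a)^2+y^2\big]\big[(x-a)^2+y^2\big]=(x^2+y^2+a^2)^2-4a^2x^2\le\zeta^4_{max},
\]
it suffices to show $\max_{\mathcal{R}}P\le\zeta^4_{max}$.

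The core step is to reduce this two-variable maximization to a single corner evaluation. Fixing $y$ and setting $u=x^2\in[0,x_0^2]$, the map $u\mapsto(u+y^2+a^2)^2-4a^2u$ is a convex parabola, so its maximum over the interval occurs at $u=0$ or at $u=x_0^2$; hence, for each fixed $y$, $P$ is maximized either at $x=0$, with value $(a^2+y^2)^2$, or at $x=\pm x_0$, with value $\big[(x_0-a)^2+y^2\big]\big[(x_0+a)^2+y^2\big]$ after re-factoring. Both expressions are strictly increasing in $v:=y^2\ge 0$, so each is maximized at $v=\omega^2$. Therefore
\[
\max_{\mathcal{R}}P=\max\Big\{\,(a^2+\omega^2)^2,\ \big[(x_0-a)^2+\omega^2\big]\big[(x_0+a)^2+\omega^2\big]\,\Big\},
\]
that is, $P$ attains its maximum on $\mathcal{R}$ at a corner.

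Finally I would check that both terms in this maximum are at most $\zeta^4_{max}$. For the corner term, $x_0-a=\theta$ and $x_0+a=d(t,r)+\theta$ with $\theta\in\{0,\theta'\}$, so it equals $\big[\theta^2+\omega^2\big]\big[(d(t,r)+\theta)^2+\omega^2\big]$, which is precisely $\zeta^4_{max}$ by the relation $d(t,r)=\omega\sqrt{\beta^4/(\theta^2/\omega^2+1)-1}-\theta$ established in the proof of Lemma~\ref{lma:second_range:max_dtr_with_thetas}; for $\theta=0$ this relation reads $d(t,r)=\sqrt{\beta^2\zeta^2_{max}-\omega^2}$ and the corner term becomes $\omega^2\big(d(t,r)^2+\omega^2\big)=\omega^2\beta^2\zeta^2_{max}=\zeta^4_{max}$, using $\zeta_{max}=\beta\omega$. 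For the term $(a^2+\omega^2)^2$, I need $a^2\le(\beta^2-1)\omega^2$; by Lemma~\ref{lma:second_range:max_dtr_with_thetas} we have $d(t,r)\le\sqrt{\beta^2\zeta^2_{max}-\omega^2}=\omega\sqrt{\beta^4-1}$ in both placements, and $1<\beta<\sqrt3$ gives $\beta^4-1=(\beta^2-1)(\beta^2+1)<4(\beta^2-1)$, hence $a^2=d(t,r)^2/4\le\omega^2(\beta^4-1)/4<(\beta^2-1)\omega^2$. Thus $\max_{\mathcal{R}}P=\zeta^4_{max}$, so $\mathcal{R}$ is covered, and the two length claims follow from the computation of $2x_0$ above.

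The step that carries the weight is the corner reduction, namely the convexity-in-$x^2$ together with monotonicity-in-$y^2$ argument, since it is what turns an a priori awkward region-containment statement into a finite check. The only other point needing care is the bound $a^2\le(\beta^2-1)\omega^2$: this is exactly where the hypothesis $\beta<\sqrt3$ is genuinely used, and skipping it would leave the secondary corner term potentially exceeding $\zeta^4_{max}$.
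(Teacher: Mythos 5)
Your proof is correct, and it takes a genuinely different route from the paper. The paper argues geometrically: by symmetry it reduces to the quarter rectangle $\Box trCD$, asserts that the minimum distance from $\overline{tr}$ to the lower arc of the Cassini curve is attained at the feet of the perpendiculars through $t$ and $r$ or at the arc's midpoint, and therefore only checks that the corners $C,D$ lie exactly on the oval and that the far-side midpoint $E$ is covered (via the factorization $(3\omega^2-\zeta^2_{max})(\omega^2-\zeta^2_{max})>0$ yielding a contradiction); the second placement (S2) is omitted as ``similar.'' You instead maximize the product $P(x,y)=[(x+a)^2+y^2][(x-a)^2+y^2]$ directly over the rectangle, using convexity in $u=x^2$ for fixed $y$ and monotonicity in $y^2$ to push the maximum to the corners $(\pm x_0,\pm\omega)$ or the midline points $(0,\pm\omega)$, then evaluate: the corners give exactly $\zeta^4_{max}$ by the defining relation for $d(t,r)$, and the midline term $(a^2+\omega^2)^2\le\zeta^4_{max}$ is where $\beta<\sqrt3$ enters. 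This buys you two things the paper does not have: the reduction-to-finitely-many-points step is proved rather than asserted (the paper's claim about where the minimum width of the oval region occurs is exactly the content your convexity/monotonicity argument supplies), and both placements S1 and S2 are handled uniformly instead of S2 being waved off. One cosmetic remark: you invoke Lemma~\ref{lma:second_range:max_dtr_with_thetas} for the bound $d(t,r)\le\omega\sqrt{\beta^4-1}$, but that lemma is stated for configurations already known to cover the rectangle; since in both placements $d(t,r)$ is given by an explicit formula, the bound follows immediately from $\beta^4/(\theta'^2/\omega^2+1)\le\beta^4$, so you should cite the formula rather than the lemma to avoid any appearance of circularity.
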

\begin{proof}
The proof has to show that S1) if $d(t,r) = \sqrt{\beta^2 \zeta^2_{max} - \omega^2}$,
$t$ and $r$ can fully cover a rectangle with width $2\omega$ and length $\sqrt{\beta^2 \zeta^2_{max} - \omega^2}$,
and that S2) if $d(t,r) = \omega \sqrt{\frac{\beta^4}{\frac{\theta'^2}{\omega^2} + 1} - 1} - \theta'$,
$t$ and $r$ can fully cover a rectangle with width $2\omega$ and length $\omega \sqrt{\frac{\beta^4}{\frac{\theta'^2}{\omega^2} + 1} - 1} + \theta'$,
where $\theta'$ denotes the $\theta$ in $[0, \sqrt{\beta^2 \zeta^2_{max} - \omega^2}]$
having maximum value of $\omega \sqrt{\frac{\beta^4}{\frac{\theta^2}{\omega^2} + 1} - 1} + \theta$.
The proof of S2 is omitted here due to the similarity.

For S1, because $\frac{\zeta_{max}}{\sqrt{3}} < \omega <  \zeta_{max}$, we have that
$d(t,r) = \sqrt{\beta^2 \zeta^2_{max} - \omega^2} = \sqrt{\frac{\zeta^4_{max} - \omega^4}{\omega^2}}$
$< \sqrt{\frac{\zeta^4_{max} - (\frac{\zeta_{max}}{\sqrt{3}})^4}{(\frac{\zeta_{max}}{\sqrt{3}})^2}}$
$= \sqrt{\frac{8 \zeta^2_{max}}{3}} < 2 \zeta_{max}$, which implies that the area covered by $t$ and $r$ will be an ellipse or waist shape
by the results in Fig. \ref{fig_radar_region_shape} \cite{7087387}.
Let $\Box ABCD$ be a rectangle with width $2\omega$ and length $\sqrt{\beta^2 \zeta^2_{max} - \omega^2}$, as the rectangle in Fig. \ref{fig_secondrange_01}.
Due to the fact that the upper and the lower parts of the curve generated by $t$ and $r$ are symmetric,
the proof of S1 suffices to show that $\Box trCD$ is within the area covered by $t$ and $r$.
Let $\overrightarrow{tD'}$ (or $\overrightarrow{rC'}$) be the line from $t$ (or, from $r$),
perpendicular to $\overline{tr}$, and intersected by the lower part of the generated curve at point $D'$ (or $C'$).
Also let the curve $\widetilde{D'C'}$ be the lower part of the curve generated by $t$ and $r$ with endpoints $D'$ and $C'$.
Because the curve generated by $t$ and $r$ is an ellipse or waist shape, and
the left and the right parts of the generated curve are symmetric,
we have that the minimum distance between $\overline{tr}$ and $\widetilde{D'C'}$ is the minimum value of $d(D',\overline{tr})$, $d(C',\overline{tr})$, and $d(E',\overline{tr})$, where $d(D',\overline{tr})$, $d(C',\overline{tr})$, and $d(E',\overline{tr})$ denote the minimum distance from $D'$, $C'$, and $E'$, respectively, to $\overline{tr}$, and $E'$ denotes the midpoint of $\widetilde{D'C'}$.
We thus also have that $\Box trCD$ is fully covered if $d(D',\overline{tr}) \geq d(D,\overline{tr}) = \omega$, $d(C',\overline{tr}) \geq d(C,\overline{tr})  = \omega$, and $d(E',\overline{tr}) \geq d(E,\overline{tr}) = \omega$, that is, the points $C$, $D$, and $E$ have to be covered by $t$ and $r$, where the point $E$ is the midpoint of $\overline{CD}$.
For the point $D$, we have that $d(t,D) \cdot d(r,D) = \omega \cdot \sqrt{d^2(r,t)+d^2(t,D)}$
$= \zeta^2_{max}$, implying that $D$ is covered by $t$ and $r$.
Similarly, the point $C$ is also covered by $t$ and $r$.
For the point $E$, we assume that $E$ is not covered by $t$ and $r$.
That is, $d(t,E) \cdot d(r,E) = (\sqrt{(\frac{d(t,r)}{2})^2+\omega^2})^2 > \zeta^2_{max}$, and thus,
we have that $3 \omega^4 - 4 \omega^2 \zeta^2_{max} + \zeta^4_{max} > 0$, implying that $(3 \omega^2 - \zeta^2_{max})(\omega^2 - \zeta^2_{max})> 0$.
We have that $\omega < \frac{\zeta_{max}}{\sqrt{3}}$ or $\omega > \zeta_{max}$, which constitutes a contradiction
because $\frac{\zeta_{max}}{\sqrt{3}} < \omega <  \zeta_{max}$. Therefore,
the points $C$, $D$, and $E$ are covered by $t$ and $r$, and $\Box ABCD$ is fully covered by $t$ and $r$.
This completes the proof of S1, and thus, the proof of the lemma is also completed.
\end{proof}

\begin{lma}\label{lma:second_range:max_length_foroneTx}
When $1 < \beta < \sqrt{3}$ and a transmitter $t$ is given, the maximum length of the rectangle with width $2\omega$
covered by $t$ and any number of receivers is equal to $2\sqrt{\beta^2 \zeta^2_{max} - \omega^2}$.
\end{lma}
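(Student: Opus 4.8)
The plan is to pin down the value $2\sqrt{\beta^2\zeta^2_{max}-\omega^2}$ as a maximum by proving a matching upper and lower bound, and to dispatch the (easy) lower bound first. By the first assertion of Lemma~\ref{lma:second_range:rectangle_is_covered}, a transmitter and a receiver placed $\sqrt{\beta^2\zeta^2_{max}-\omega^2}$ apart cover a width-$2\omega$ rectangle of length $\sqrt{\beta^2\zeta^2_{max}-\omega^2}$, with the two devices sitting at the midpoints of its two vertical sides. Hence, putting $t$ at the origin of the deployment line and receivers $r_1,r_2$ at the two points $\pm\sqrt{\beta^2\zeta^2_{max}-\omega^2}$ on that line --- the sequence $(r_1,t,r_2)$ --- the pairs $(t,r_1)$ and $(t,r_2)$ cover two width-$2\omega$ rectangles that share the vertical edge through $t$, so their union is a single width-$2\omega$ rectangle of length $2\sqrt{\beta^2\zeta^2_{max}-\omega^2}$. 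It therefore only remains to show that no longer width-$2\omega$ rectangle can be covered by $t$ and an arbitrary receiver set.

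For the upper bound I would take the deployment line as the $x$-axis and suppose a width-$2\omega$, length-$\ell$ rectangle with top edge from $(0,\omega)$ to $(\ell,\omega)$ is covered by $t$ (placed at $(a,0)$ for some $a\in\mathbb{R}$) together with receivers on the $x$-axis. The two top corners must each be covered, and since $t$ is the only transmitter there are receivers $r_p,r_q$ with $d(t,(0,\omega))\,d(r_p,(0,\omega))\le\zeta^2_{max}$ and $d(t,(\ell,\omega))\,d(r_q,(\ell,\omega))\le\zeta^2_{max}$. Every point of the $x$-axis lies at distance at least $\omega$ from any point of height $\omega$, so $d(r_p,(0,\omega))\ge\omega$ and $d(r_q,(\ell,\omega))\ge\omega$, whence $d(t,(0,\omega))\le\zeta^2_{max}/\omega$ and $d(t,(\ell,\omega))\le\zeta^2_{max}/\omega$. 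Squaring, subtracting $\omega^2$, and using $\beta=\zeta_{max}/\omega$ gives $a^2\le\beta^2\zeta^2_{max}-\omega^2$ and $(\ell-a)^2\le\beta^2\zeta^2_{max}-\omega^2$, and the triangle inequality $\ell=a+(\ell-a)\le|a|+|\ell-a|\le 2\sqrt{\beta^2\zeta^2_{max}-\omega^2}$ closes the bound; combining the two directions yields the claimed maximum.

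The argument is short, and the single point worth stating carefully is the reduction itself: what makes it work is that the transmitter \emph{alone} --- regardless of how many receivers are used or where they are placed --- is forced to lie within distance $\zeta^2_{max}/\omega$ of each end of the top edge, since any receiver on the line contributes a factor of at least $\omega$ to the Cassini-oval product. This is exactly what neutralizes the ``any number of receivers'' freedom and collapses the whole question to the triangle inequality for $a$ and $\ell-a$. On the achievability side the only thing to double-check is that the two sub-rectangles from Lemma~\ref{lma:second_range:rectangle_is_covered} abut along the vertical line through $t$ --- immediate from that lemma's geometry, in which the transmitter and receiver sit on the short edges of their rectangle --- so that their union is genuinely a rectangle of length $2\sqrt{\beta^2\zeta^2_{max}-\omega^2}$ rather than a shorter, merely overlapping region.
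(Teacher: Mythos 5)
Your proof is correct, and it is tighter than the paper's own argument. The paper proves only the upper bound, by contradiction: it assumes a sequence $(R^{n_{p_0}}, t, R^{n_{p_1}})$ covers a rectangle longer than $2\sqrt{\beta^2\zeta^2_{max}-\omega^2}$ and appeals to Lemmas \ref{lma:second_range:max_dtr_with_thetas}--\ref{lma:second_range:rectangle_is_covered} for the claim that the covered length from $t$ to either vertical boundary is at most $\sqrt{\beta^2\zeta^2_{max}-\omega^2}$; the attainability of $2\sqrt{\beta^2\zeta^2_{max}-\omega^2}$ is only asserted in the text preceding the lemma (via the $(r_1,t,r_2)$ placement). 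The step the paper leaves implicit is precisely why the single-pair analysis of Lemma \ref{lma:second_range:max_dtr_with_thetas} still caps the reach of $t$ when arbitrarily many receivers are available; your corner-product observation supplies exactly this: each far top corner must be covered by some pair $(t,r)$ with $r$ on the deployment line, any such $r$ is at distance at least $\omega$ from a point at height $\omega$, hence $d(t,\text{corner})\le\zeta^2_{max}/\omega$, giving $a^2\le\beta^2\zeta^2_{max}-\omega^2$ and $(\ell-a)^2\le\beta^2\zeta^2_{max}-\omega^2$ and then $\ell\le 2\sqrt{\beta^2\zeta^2_{max}-\omega^2}$ by the triangle inequality. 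This is more elementary and more self-contained than the paper's route (it bypasses the optimization over $\theta$ in Lemma \ref{lma:second_range:max_dtr_with_thetas} entirely, and in fact the upper-bound half does not even need $\beta<\sqrt{3}$), while your lower-bound half, gluing the two rectangles of Lemma \ref{lma:second_range:rectangle_is_covered} along the vertical line through $t$, makes the ``is equal to'' in the statement fully proved rather than deferred to surrounding prose. The only dependence worth keeping explicit is that the achievability direction still needs $1<\beta<\sqrt{3}$, since it inherits the hypotheses of Lemma \ref{lma:second_range:rectangle_is_covered}.
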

\begin{proof}
Assume that there exists a sequence $\Psi = (R^{n_{p_0}}, t, R^{n_{p_1}})$ deployed on a line such that
the length of the rectangle covered by $\Psi$, denoted by $\ell(\Psi)$, is greater than $2\sqrt{\beta^2 \zeta^2_{max} - \omega^2}$, where
$n_{p_0}, n_{p_1} \geq 0$ and $R^{n_{p_i}}$ denotes the set of $n_{p_i}$ receivers.
By Lemmas \ref{lma:second_range:max_dtr_with_thetas}-\ref{lma:second_range:rectangle_is_covered},
we have that the maximum length of the covered rectangle from $t$ to one side boundary is at most $\sqrt{\beta^2 \zeta^2_{max} - \omega^2}$.
This implies that $2 \sqrt{\beta^2 \zeta^2_{max} - \omega^2} \geq \ell(\Psi) > 2\sqrt{\beta^2 \zeta^2_{max} - \omega^2}$,
which constitutes a contradiction. This thus completes the proof.
\end{proof}

\begin{figure}
\center
{\includegraphics[width=7cm]{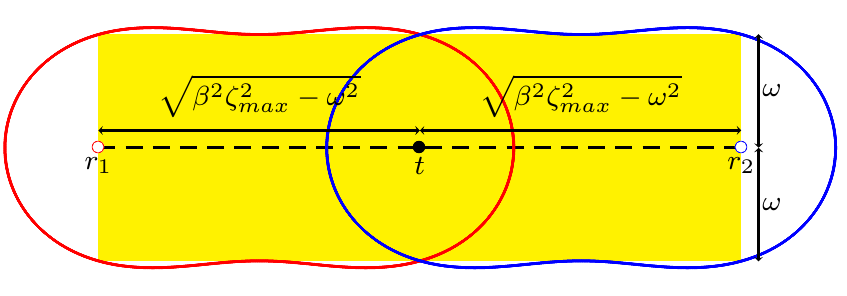}}
\caption{Example of deploying one transmitter with two receivers when $\omega = \frac{\zeta_{max}}{\sqrt{3}}$.}
\label{fig:example_line_deployment_with_larger_width}
\end{figure}

Let $I^m_n$ be the sequence $(r_1, t_1, r_2, t_2, \ldots, r_m, t_m)$ if $n = m$, and be
the sequence $(r_1, t_1, r_2, t_2, \ldots, r_m, t_m, r_{m+1})$ if $n = m+1$.
The proposed placement for $1 < \beta < \sqrt{3}$, termed the Rotated Placement hereafter,
is to place $m$ transmitters and $n$ receivers following the sequence $I^m_n$ sequentially,
where the distance between a transmitter and its adjacent receiver is $\sqrt{\beta^2 \zeta^2_{max} - \omega^2}$
by Lemma \ref{lma:second_range:rectangle_is_covered},
as the examples in Fig. \ref{fig_line_deployment_sequence_second_range}.

\begin{figure}
\center
{\includegraphics[width=8cm]{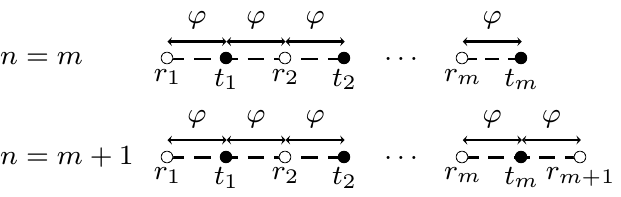}}
\caption{Examples of deploying $m$ transmitters and $n$ receivers by the Rotated Placement, where $\varphi$ denotes $\sqrt{\beta^2 \zeta^2_{max} - \omega^2}$.}
\vspace{-15pt}
\label{fig_line_deployment_sequence_second_range}
\end{figure}

The procedure MinCost$\_$RP is designed to find a sequence of transmitters and receivers with minimum cost to
cover a rectangle area $Area$ with length $\ell$ and width $2\omega$.
By Lemma \ref{lma:second_range:max_dtr_with_thetas}-\ref{lma:second_range:rectangle_is_covered},
if $\ell \leq \omega \sqrt{\frac{\beta^4}{\frac{\theta'^2}{\omega^2} + 1} - 1} + \theta'$,
the $Area$ can be covered by one transmitter and one receiver.
Otherwise, by Lemma \ref{lma:second_range:max_dtr_with_thetas}-\ref{lma:second_range:max_length_foroneTx},
one transmitter coupled with two receivers can be used to cover a rectangle with length $2\sqrt{\beta^2 \zeta^2_{max} - \omega^2}$ that is called a unit hereafter.
The procedure MinCost$\_$RP is then to see at least how many units are required.
The necessary numbers of transmitters and receivers are calculated and stored in $numTx$ and $numRx$, respectively.
When the $Area$ is not fully covered,
additional one transmitter is added if the length of the remaining uncovered part of the $Area$ is not greater than the half length of a unit; and otherwise,
additional one transmitter and one receiver are added.
Finally, by the obtained $numTx$ and $numRx$, the Rotated Placement is used to generate the optimal solution $I^{numTx}_{numRx}$,
which is proved in Theorem \ref{thm:mincost_RP}.

\begin{thm}\label{thm:mincost_RP}
When $1 < \beta < \sqrt{3}$, the procedure MinCost$\_$RP can find an optimal solution to
cover a rectangle area $Area$ with length $\ell$ and width $2\omega$.
\end{thm}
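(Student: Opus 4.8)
The plan is to prove the claim in two halves --- \emph{feasibility} (the placement returned by MinCost$\_$RP actually covers $Area$) and \emph{optimality} (no placement of transmitters and receivers on the deployment line that covers $Area$ costs less) --- and to conclude by matching the cost of the output against the optimality bound. Feasibility I would handle by walking through the branches of the procedure, each reducing to Lemmas~\ref{lma:second_range:max_dtr_with_thetas}--\ref{lma:second_range:max_length_foroneTx}. By Lemma~\ref{lma:second_range:rectangle_is_covered} an adjacent transmitter--receiver pair that is $\varphi=\sqrt{\beta^2\zeta_{max}^2-\omega^2}$ apart covers the width-$2\omega$ sub-rectangle spanning the horizontal segment between them; since consecutive devices of the Rotated Placement $I^{numTx}_{numRx}$ are exactly $\varphi$ apart, these sub-rectangles tile a width-$2\omega$ strip whose length grows by $\varphi$ per device. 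If $\ell$ is at most the single-pair maximum $\lambda_1:=\omega\sqrt{\beta^4/(\theta'^2/\omega^2+1)-1}+\theta'$, one transmitter and one receiver in the spread configuration of Lemma~\ref{lma:second_range:rectangle_is_covered} already suffice; otherwise, after laying down $k=\lfloor\ell/(2\varphi)\rfloor$ units ($k$ transmitters, $k+1$ receivers, covering length $2k\varphi$), the residual length is covered by an extra transmitter --- extending the strip by $\varphi$ by Lemma~\ref{lma:second_range:rectangle_is_covered} --- when the residue is $\le\varphi$, and by an extra transmitter--receiver pair --- extending it by $2\varphi$ by Lemma~\ref{lma:second_range:max_length_foroneTx} --- otherwise. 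In each branch the constructed strip contains $[0,\ell]\times[-\omega,\omega]$.

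For the lower bound I would first record a geometric fact about Cassini ovals whose two foci lie on the deployment line: the cross-section of one oval by a vertical line is an interval symmetric about that line, and consequently the set of horizontal coordinates whose entire width-$2\omega$ column fits inside one fixed oval is an interval symmetric about the midpoint of the two foci. Hence a width-$2\omega$ column can be covered \emph{only if} it lies inside one \emph{single} pair's oval, and Lemma~\ref{lma:second_range:max_dtr_with_thetas} applied to the interval of such columns shows that this interval has length at most $\lambda_1<2\varphi$ and that its transmitter is within horizontal distance $\varphi$ of every point of it (because $d(t,r)+\theta\le\varphi$, hence also $\theta\le\varphi$). Therefore every $x\in[0,\ell]$ is within horizontal distance $\varphi$ of some transmitter, i.e. the transmitter coordinates form a $\varphi$-cover of $[0,\ell]$, which forces $|T|\ge m^\ast:=\lceil\ell/(2\varphi)\rceil$, and by the symmetric argument (swapping the roles of transmitters and receivers) also $|R|\ge m^\ast$.

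Next I would sharpen the receiver count. When $|T|=m^\ast$ the $\varphi$-cover of $[0,\ell]$ is essentially rigid, its points lying at or very near $\varphi,3\varphi,\dots,(2m^\ast-1)\varphi$; evaluating the single-pair coverage condition at a ``seam'' coordinate $2j\varphi$, where the only available transmitter is exactly $\varphi$ away, and using the identities $\varphi^2+\omega^2=\beta^2\zeta_{max}^2$ and $\beta^2\zeta_{max}^2\cdot\omega^2=\zeta_{max}^4$, forces the covering pair's receiver to sit exactly at $2j\varphi$; since the seam coordinates are distinct, this yields $|R|\ge numRx$. When instead $|T|>numTx=m^\ast$, the hypothesis $c_t\ge c_r$ together with $|R|\ge m^\ast$ makes the cost at least $(m^\ast+1)c_t+m^\ast c_r\ge numTx\cdot c_t+numRx\cdot c_r$. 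In both cases the cost of any feasible placement is at least $numTx\cdot c_t+numRx\cdot c_r$, which is precisely the cost of the placement returned by MinCost$\_$RP, so that placement is optimal.

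The step I expect to be the genuine obstacle is the sharpened receiver bound in the case $|T|=numTx$: when $\ell$ is not an exact multiple of $2\varphi$, the $\varphi$-cover of $[0,\ell]$ by $m^\ast$ points has real slack, so one must carefully delimit the admissible window for each transmitter position in every regime ($\ell\le\lambda_1$, $\ell=2k\varphi$, $2k\varphi<\ell\le(2k+1)\varphi$, and $(2k+1)\varphi<\ell<2(k+1)\varphi$), show that each seam column still forces a distinct receiver, and treat the endpoints separately since there an extremal pair may use the spread configuration of Lemma~\ref{lma:second_range:rectangle_is_covered} rather than a distance-$\varphi$ pair --- exactly recovering the $numRx$ produced by the procedure. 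By contrast, feasibility, the $\varphi$-cover lower bound on $|T|$, and the $c_t\ge c_r$ trade-off are comparatively routine.
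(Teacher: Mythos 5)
Your feasibility argument, the $\varphi$-cover lower bounds $|T|,|R|\ge\lceil \ell/(2\varphi)\rceil$ (with $\varphi=\sqrt{\beta^2\zeta_{max}^2-\omega^2}$), and the seam-rigidity argument in the exact case $\ell=2k\varphi$ are sound, and they already go further than the paper's own proof, which works out only case C1 (and within it only subcase C1.1) and dismisses the remaining cases as ``similar.'' But the step you defer as ``the genuine obstacle'' --- forcing $|R|\ge numRx$ when $|T|=numTx$ in the regime $(2k+1)\varphi<\ell<(2k+2)\varphi$, where the procedure returns $k+2$ receivers --- is not merely delicate: it is false, so no amount of care with the slack in the $\varphi$-cover will recover it. Concretely, take $\omega=1$, $\zeta_{max}=\sqrt2$ (so $\beta=\sqrt2$, $\zeta_{max}^4=4$, $\varphi=\sqrt3\approx1.732$), pick $a=0.15$ and set $h=\sqrt{4/(1+a^2)-1}\approx1.7065$, so that $(h^2+1)(a^2+1)=4$ exactly. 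Place $r_1$ at $0$, $t_1$ at $h-a$, $r_2$ at $t_1+(h+a)$, $t_2$ at $r_2+(h-a)$. The pairs $(t_1,r_1)$ and $(t_2,r_2)$ are at distance $h-a<\varphi$, cover all columns between their devices, and overhang exactly $a$ beyond each device, since $\bigl((h-a+a)^2+1\bigr)(a^2+1)=(h^2+1)(a^2+1)=4$. The cross pair $(t_1,r_2)$ is at distance $h+a>\varphi$ and covers precisely the columns at horizontal distance $u\in[a,h]$ from $t_1$: the product $(u^2+1)\bigl((h+a-u)^2+1\bigr)$ equals $4$ at both endpoints and is smaller in between, because its derivative is $2\bigl(2u-(h+a)\bigr)\bigl(1-u(h+a-u)\bigr)$ and $u(h+a-u)\le(h+a)^2/4<1$. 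The three column-intervals abut, and by your own cross-section observation a column is covered once its top point is, so two transmitters and two receivers fully cover a width-$2$ rectangle of length $3h+a\approx5.269>3\varphi\approx5.196$. For, say, $\ell=5.25$ MinCost$\_$RP outputs $numTx=2$, $numRx=3$ (cost $2c_t+3c_r$), while the above placement is feasible at cost $2c_t+2c_r$; hence the theorem's conclusion itself fails in case C3. The mechanism is exactly what your seam argument cannot rule out: an interior pair with $d(t,r)=h+a>\varphi$ leaves buffers of width $a$ near its own devices, but those buffers are absorbed by the overhangs of the end pairs, and since $h(a)=\varphi-O(a^2)$ while the gain is linear in $a$, the total strictly exceeds $(2k+1)\varphi$.

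So the comparison is this: where your plan is explicit (C1 via rigidity plus forced receivers at seams, C2 via the two $\varphi$-cover bounds, the $c_t\ge c_r$ trade-off when $|T|>numTx$), it is correct and cleaner than the paper's argument, which bounds coverage per transmitter (Lemma~\ref{lma:second_range:max_length_foroneTx}), swaps roles of transmitters and receivers, and never actually carries out the C1.2, C2, or C3 cases. But the missing piece you flagged is precisely where both your proposal and the paper's proof break down, and it cannot be filled because the extremal configurations are not restricted to adjacent pairs at spacing $\varphi$ or to the single-pair spread configuration of Lemma~\ref{lma:second_range:rectangle_is_covered}; mixed configurations using cross pairs at distance greater than $\varphi$ strictly beat $(2k+1)\varphi$ with $k+1$ transmitters and $k+1$ receivers. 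Any correct treatment of case C3 would have to determine the true maximum length coverable by $m$ transmitters and $n$ receivers (which exceeds the value implicitly assumed by MinCost$\_$RP), not merely tighten the seam analysis.
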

\begin{proof}
By Lemma \ref{lma:second_range:max_dtr_with_thetas}-\ref{lma:second_range:rectangle_is_covered},
we have that deploying one transmitter and one receiver to cover the $Area$ is an optimal solution
if $\ell \leq \omega \sqrt{\frac{\beta^4}{\frac{\theta'^2}{\omega^2} + 1} - 1} + \theta'$.
In addition,
because it is easy to verify that the difference between $numTx$ and $numRx$ in the procedure MinCost$\_$RP is at most one,
the proof suffices to show that the solution generated by the procedure MinCost$\_$RP has a minimum cost.
Three cases, including C1) $\ell = 2k\sqrt{\beta^2 \zeta^2_{max} - \omega^2}$, C2) $2k\sqrt{\beta^2 \zeta^2_{max} - \omega^2} < \ell \leq (2k+1)\sqrt{\beta^2 \zeta^2_{max} - \omega^2}$, and C3) $(2k+1)\sqrt{\beta^2 \zeta^2_{max} - \omega^2} < \ell < (2k+2)\sqrt{\beta^2 \zeta^2_{max} - \omega^2}$, are considered, where $k \geq 0$ and $\ell > \omega \sqrt{\frac{\beta^4}{\frac{\theta'^2}{\omega^2} + 1} - 1} + \theta'$. Because the proofs of C2 and C3 are similar to that of C1, the proofs of C2 and C3 are omitted.

\begin{algorithm} [http]
\label{alg:RP}
\begin{algorithmic}[1]
\Procedure{MinCost$\_$RP} {$\omega$, $\ell$}

\If {$\ell \leq \omega \sqrt{\frac{\beta^4}{\frac{\theta'^2}{\omega^2} + 1} - 1} + \theta'$, where $\theta'$ is defined in Lemma \ref{lma:second_range:max_dtr_with_thetas}}

    \State Let $I$ be the sequence $(r, t)$ with $d(t,r) = \omega \sqrt{\frac{\beta^4}{\frac{\theta'^2}{\omega^2} + 1} - 1} - \theta'$

    \State $minValue$ $\gets$ $c_t + c_r$

    \State \Return $(minValue, I)$

\EndIf

\State $numTx \gets \floor*{\frac{\ell}{2\sqrt{\beta^2 \zeta^2_{max} - \omega^2}}}$

\State $numRx \gets numTx + 1$

\State $remainingLen \gets \ell\Mod{2\sqrt{\beta^2 \zeta^2_{max} - \omega^2}}$

\If {$remainingLen > 0$}

    \State $numTx \gets numTx + 1$

    \If {$\ceil{\frac{remainingLen}{\sqrt{\beta^2 \zeta^2_{max} - \omega^2}}}  = 2$}

        \State $numRx \gets numRx + 1$

    \EndIf

\EndIf

\State Let $I^{numTx}_{numRx}$ be the sequence generated by the Rotated Placement with $numTx$ transmitters and $numRx$ receivers

\State $minValue$ $\gets$ $numTx \times c_t + numRx \times c_r$

\State \Return $(minValue, I^{numTx}_{numRx})$

\EndProcedure

\end{algorithmic}
\end{algorithm}

For C1, let $I^{m_1}_{n_1}$ and $\Psi^{m_2}_{n_2}$ be the sequences obtained by the procedure MinCost$\_$RP and the optimal solution, respectively.
Assume that $I^{m_1}_{n_1}$ is not an optimal solution.
This implies that
$\varsigma(I^{m_1}_{n_1}) > \varsigma(\Psi^{m_2}_{n_2})$, where $\varsigma(I^{m_1}_{n_1})$ (or $\varsigma(\Psi^{m_2}_{n_2})$) denotes the placement cost of $I^{m_1}_{n_1}$ (or $\Psi^{m_2}_{n_2})$, and $m_1 \neq m2$ or $n_1 \neq n_2$.
Because $\ell = 2k\sqrt{\beta^2 \zeta^2_{max} - \omega^2}$,
we have that $m_1 = k$ and $n_1 = k+1$ by the procedure MinCost$\_$RP.
By Lemma \ref{lma:second_range:max_length_foroneTx},
due to the fact that one transmitter with receivers can cover a rectangle with length at most $2\sqrt{\beta^2 \zeta^2_{max} - \omega^2}$,
at least $\floor*{\frac{2k\sqrt{\beta^2 \zeta^2_{max} - \omega^2}}{2\sqrt{\beta^2 \zeta^2_{max} - \omega^2}}} = k$ transmitters are required
to cover the $Area$.
Because $m_1 = k$, we have that C1.1) $m_2 > k$, or C1.2) $m_2 = k$ and $n_2 < n_1 = k + 1$.
Due to the fact that the proof of C1.2 is similar to that of C1.1, the proof of C1.2 is omitted here.
For C1.1, if $m_2 > k$, because $\varsigma(I^{m_1}_{n_1}) > \varsigma(\Psi^{m_2}_{n_2})$, we have that
$c_t \times k + c_r \times (k+1) > c_t \times (k+y) + c_r \times n_2$, where $m_2 = k + y$ and $y \geq 1$.
Because $y \geq 1$ and $c_t \geq c_r$, we have that $n_2 < (k+1) - \frac{yc_t}{c_r} \leq k$.
Due to the fact that the covered area is the same when receivers and transmitters are swapped,
by Lemma \ref{lma:second_range:max_length_foroneTx},
we have that one receiver with transmitters can cover a rectangle with length at most $2\sqrt{\beta^2 \zeta^2_{max} - \omega^2}$.
In addition, because $n_2 < k$, that is, $n_2 \leq k-1$, the rectangle with length at most $2(k-1)\sqrt{\beta^2 \zeta^2_{max} - \omega^2}$ can be covered by
$n_2$ receivers with transmitters, which implies that the $Area$ cannot be covered by $\Psi^{m_2}_{n_2}$.
This constitutes a contradiction, and completes the proof of C1.1.
This thus completes the proof of the theorem.
\end{proof}

For an $Area$ with width $W = 2\omega$ and length $L = \ell$, the algorithm Opt$\_$MCLP is designed as shown in Algorithm \ref{alg:opt} to combine the function ComputeMinCost, which is proposed in \cite{7087387} and used for
$2\omega \leq \frac{2\zeta_{max}}{\sqrt{3}}$, with the
procedure MinCost$\_$RP and used for
$\frac{2\zeta_{max}}{\sqrt{3}} < 2\omega <  2\zeta_{max}$.

\begin{algorithm} \caption{Opt$\_$MCLP($\omega$, $\ell$)} \label{alg:opt}
\begin{algorithmic}[1]

\State $S \gets \emptyset$; $minValue \gets \infty$

\If {$\omega \leq \frac{\zeta_{max}}{\sqrt{3}}$}

    \State Use ComputeMinCost function in \cite{7087387} to get the total cost $minValue$ and the placement sequence $S$ by given $\omega$ and $\ell$

\Else

    \State $(minValue, S) \gets$ MinCost$\_$RP ($\omega$, $\ell$)

\EndIf

\State \Return $(minValue, S)$


\end{algorithmic}
\end{algorithm}

\section{Performance Evaluation}\label{section:Simulation}

\begin{figure}[h]
\vspace{-20pt}
\center
\subfigure[]{\includegraphics[width=4cm]{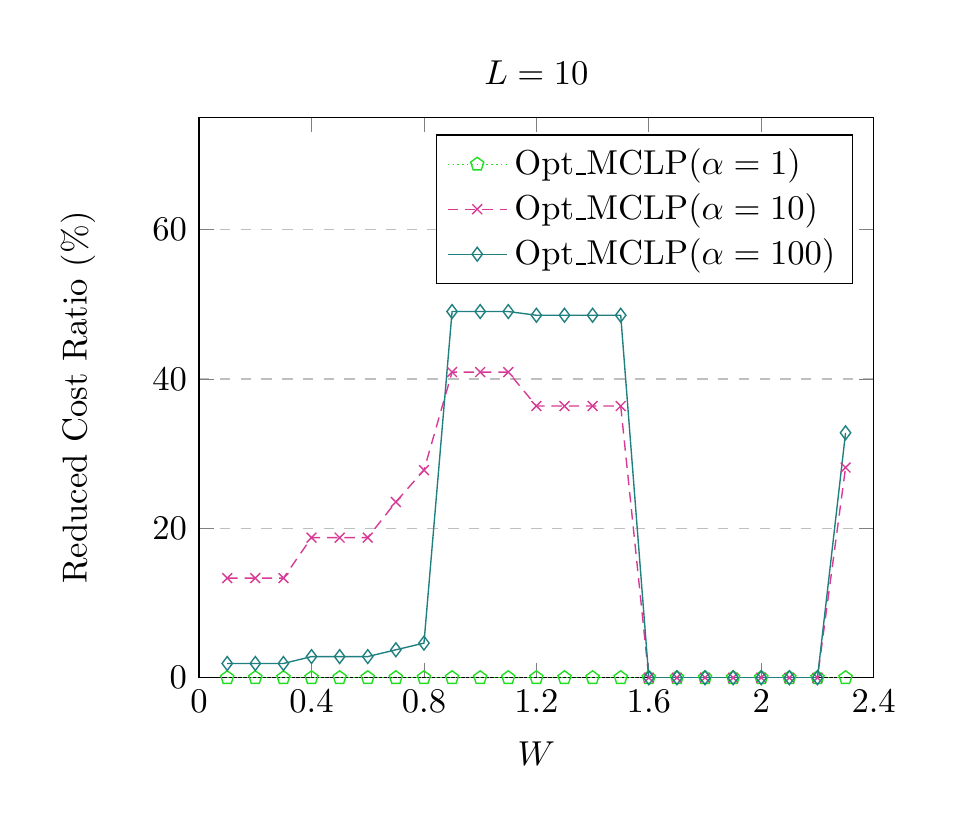}\label{Fig:sim1_ratio_L10}}
\subfigure[]{\includegraphics[width=4cm]{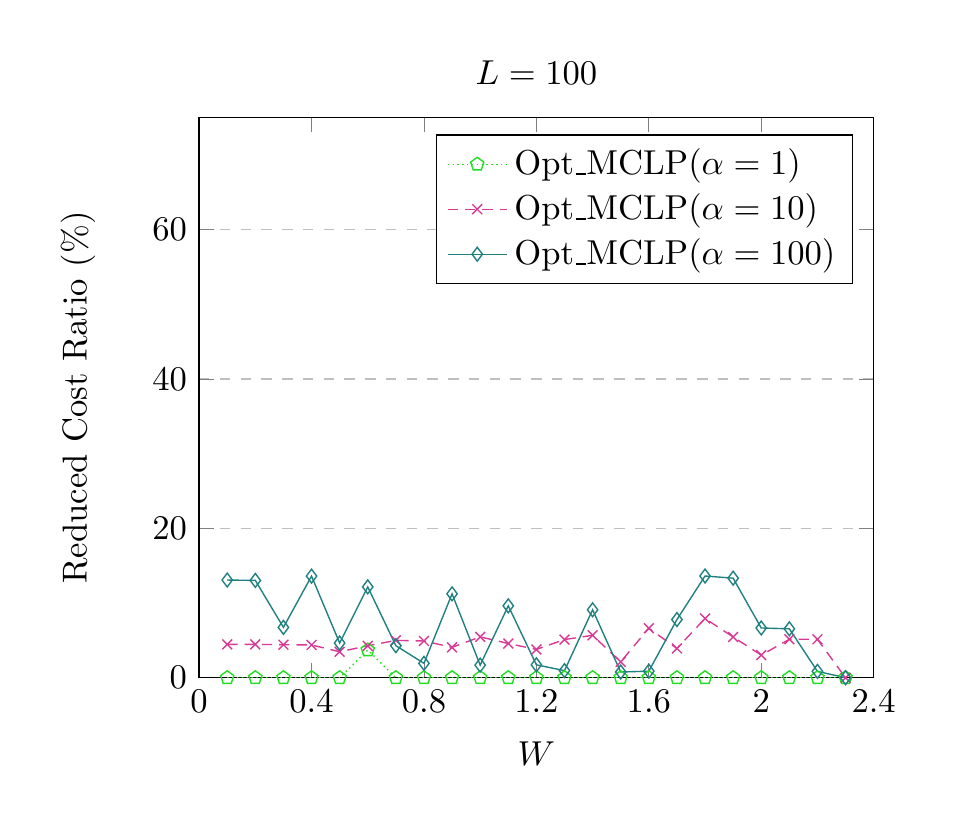}\label{Fig:sim1_ratio_L100}}
\caption{The reduced cost ratio when $\zeta_{max} = 2$, $\mu = 0.1$, $W$ ranges from $0.1$ to $2.3$, and $\alpha$ ranges from $1$ to $100$.
The values of $L$ are $10$ in (a) and $100$ in (b), respectively.}
\label{Fig:sim1_ratio}
\end{figure}

\begin{figure}[h]
\center
\subfigure[]{\includegraphics[width=4cm]{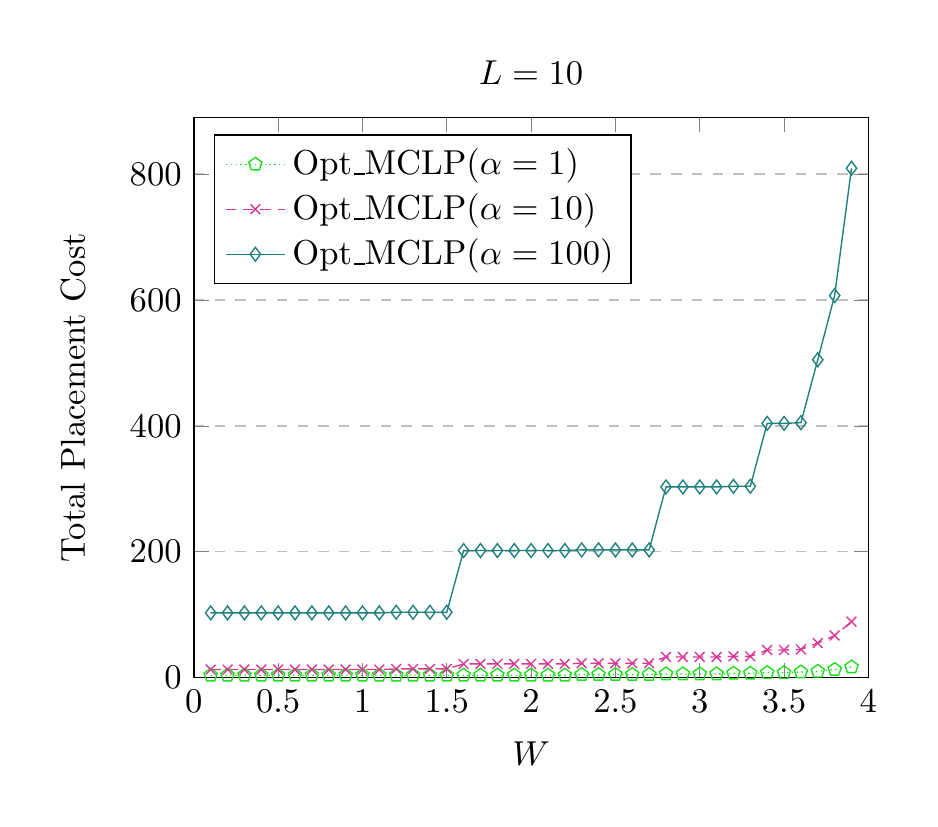}\label{Fig:sim2_cost_L10}}
\subfigure[]{\includegraphics[width=4cm]{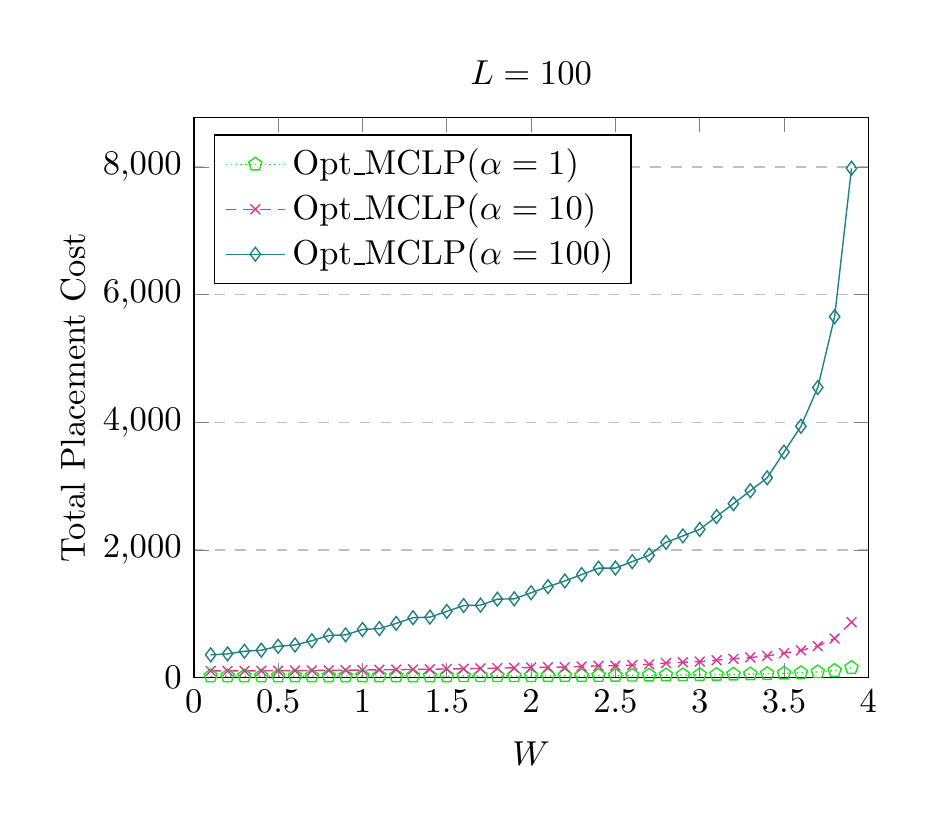}\label{Fig:sim2_cost_L100}}
\caption{The total placement cost when $\zeta_{max} = 2$, $\mu = 0.1$, $W$ ranges from $0.1$ to $3.9$, and $\alpha$ ranges from $1$ to $100$.
The values of $L$ are $10$ in (a) and $100$ in (b), respectively.}

\label{Fig:sim2_cost}
\end{figure}

Here, simulations developed by C++ were used to
evaluate the performance of the Opt$\_$MCLP for the MCLP problem.
In the simulations, $\mu$ and $c_r$ were set to $0.1$ and $1$, respectively.
In addition, $W$, $L$, $\zeta_{max}$, and $\alpha$ were set from $0.1$ to $10$, from $10$ to $100$, from $1$ to $4$, and from $1$ to $100$, respectively,
where $\alpha$ denoted $\frac{c_t}{c_r}$.
Let the reduced cost ratio of $B$ compared with $A$ be $\frac{c_A-c_B}{c_A}$,
where $A$ and $B$ were solutions of the MCLP problem and $c_A$ (or, $c_B$) denoted
the total placement cost required by $A$ (or, $B$).
For the MCLP problem,
because transmitters and receivers had to be deployed on a line to form a linear barrier,
the Algorithm 2 in \cite{7087387} with one linear barrier was compared
in terms of the reduced cost ratio in $\S$\ref{subsection:MCLP_Problem}.
Because the Algorithm 2 proposed in \cite{7087387}
can be only used for covering a rectangle area with width
less than or equal to $\frac{2\zeta_{max}}{\sqrt{3}}$,
$W$ was set from $0.1$ to $2.3$ when $\zeta_{max} = 2$.
In addition, due to the fact that
the width of the $Area$ had to be less than $2\zeta_{max}$ in the MCLP problem,
we also evaluated the total placement cost of the Opt$\_$MCLP in $\S$\ref{subsection:MCLP_Problem}
when $\zeta_{max} = 2$ and $W$ ranged from $0.1$ to $3.9$.

\subsection{The MCLP Problem}\label{subsection:MCLP_Problem}

Let the reduced cost ratio of the Opt$\_$MCLP($\alpha=k$)
denote the reduced cost ratio of the Opt$\_$MCLP compared with the Algorithm 2 in \cite{7087387} having one linear barrier when $\alpha=k$.
Fig. \ref{Fig:sim1_ratio_L10} and Fig. \ref{Fig:sim1_ratio_L100}
show the reduced cost ratios of the Opt$\_$MCLP($\alpha=1$ and Opt$\_$MCLP($\alpha=100$), respectively,
when the values of $L$ are equal to $10$, $50$, and $100$, respectively.
It is clear that the Opt$\_$MCLP provides better performance than the Algorithm 2 proposed in \cite{7087387} with one linear barrier.
In addition, the Opt$\_$MCLP reduces the placement cost by up to $49\%$ in comparison with
the other method.

Let the total placement cost of the Opt$\_$MCLP($\alpha=k$)
denote the total placement cost of the Opt$\_$MCLP when $\alpha=k$.
Fig. \ref{Fig:sim2_cost_L10} and Fig. \ref{Fig:sim2_cost_L100}
show the total placement costs of the Opt$\_$MCLP($\alpha=1$) and Opt$\_$MCLP($\alpha=100$), respectively,
when the values of $L$ are equal to $10$, $50$, and $100$, respectively.
It is clear that in Fig. \ref{Fig:sim2_cost_L10} or Fig. \ref{Fig:sim2_cost_L100},
the total placement cost of the Opt$\_$MCLP increases with the increasing of $\alpha$.
This is because the cost of a transmitter increases when $\alpha$ increases.
In addition, it is also clear that
the greater the value of $W$, the higher the total placement cost
required by the Opt$\_$MCLP($\alpha=1$), Opt$\_$MCLP($\alpha=10$), or Opt$\_$MCLP($\alpha=100$).
This is due to the fact that transmitters and receivers are closely deployed in order to cover the $Area$ with greater width,
and therefore, more transmitters and receivers are required.
Moreover,
when the value of $L$ increases from $10$ to $100$,
the total placement cost of the Opt$\_$MCLP($\alpha=1$), Opt$\_$MCLP($\alpha=10$), or Opt$\_$MCLP($\alpha=100$)
increases because more area has to be covered.

\section{Conclusion}\label{section:Conclusion}
The sensing model of bistatic radars (Cassini oval sensing model \cite{7087387}) is in fact very complex and
the shape of sensing areas are, therefore, varied according to the variation of distance between the transmitter(s) and the receiver(s) in the bistatic radar system.
In this paper, we study the MCLP problem for constructing a belt barrier with minimum placement cost.
For the MCLP problem with $0 \leq 2\omega \leq \frac{2\zeta_{max}}{\sqrt{3}}$,
a function, termed the MinCost$\_$RP, is proposed to find an optimal solution. In addition, for $\frac{2\zeta_{max}}{\sqrt{3}} < 2\omega <  2\zeta_{max}$, an optimal solution, termed the Opt$\_$MCLP, is proposed.
Theoretical analysis is also provided for proving the optimality of the Opt$\_$MCLP.
Simulation results show that the Opt$\_$MCLP has a significantly lower placement cost than the existing solution.

\normalem

\bibliographystyle{IEEEtran}
\bibliography{my}

\end{document}